\documentclass[onecolumn,english]{sig-alternate}
\usepackage[T1]{fontenc}
\usepackage[latin9]{inputenc}
\usepackage{float}
\usepackage{amstext}
\usepackage{stackrel}
\usepackage{graphicx}

\makeatletter

\providecommand{\tabularnewline}{\\}
\floatstyle{ruled}
\newfloat{algorithm}{tbp}{loa}
\providecommand{\algorithmname}{Algorithm}
\floatname{algorithm}{\protect\algorithmname}

\makeatother

\usepackage{babel}
\begin{document}

\conferenceinfo{}{}

\title{Enhancing Reuse of Constraint Solutions to Improve Symbolic Execution\titlenote{this paper has been submitted to ISSTA 2015}}

\numberofauthors{3}

\author{\alignauthor Xiangyang Jia\\
\affaddr{State Key Lab of Software Engineering,Wuhan University}\\
\affaddr{Luojia hill Street, 229}\\
\affaddr{Wuhan, China}\\
\email{jxy@whu.edu.cn}\\
\alignauthor Carlo Ghezzi\\
\affaddr{Dipartimento di Elettronica e Informazione, Politecnico di Milano}\\
\affaddr{Via Golgi, 42}\\
\affaddr{Milano, Italy}\\
\email{carlo.ghezzi@polimi.it}\\
\alignauthor Shi Ying\\
\affaddr{State Key Lab of Software Engineering,Wuhan University}\\
\affaddr{Luojia hill Street 229}\\
\affaddr{Wuhan, China}\\
\email{yingshi@whu.edu.cn}\\
}

\date{20 Jan 2015}
\maketitle
\begin{abstract}
Constraint solution reuse is an effective approach to save the time
of constraint solving in symbolic execution. {\normalsize{}Most of
the existing reuse approaches are based on }syntactic{\normalsize{}
or semantic equivalence of constraints; e.g. }the Green framework
is able to reuse constraints which have different representations
but are {\normalsize{}semantically} equivalent, through canonizing
constraints into syntactically equivalent normal forms.{\normalsize{}
However, }syntactic/semantic{\normalsize{} equivalence is not a necessary
condition for reuse}\textemdash some constraints are {\normalsize{}not
}syntactically or {\normalsize{}semantically} equivalent, but their
solutions still have potential for reuse. Existing approaches are
unable to recognize and reuse such constraints. 

In this paper, we present GreenTrie, an extension to the Green framework,
which supports constraint reuse based on the logical implication relations
among constraints. GreenTrie provides a component, called L-Trie,
which stores constraints and solutions into tries, indexed by an implication
partial order graph of constraints. L-Trie is able to carry out logical
reduction and logical subset and superset querying for given constraints,
to check for reuse of previously solved constraints. We report the
results of an experimental assessment of GreenTrie against the original
Green framework, which shows that our extension achieves better reuse
of constraint solving result and saves significant symbolic execution
time.
\end{abstract}
\category{D.2.4}{Software Engineering}{Software/Program Verification}
\category{D.2.8}{Software Engineering}{Testing and Debugging}

\terms{}{Verification}

\keywords{}{constraint solving, symbolic execution, cache and reuse}

\section{Introduction\label{sec:introduction}}

Symbolic execution has been proposed as a program analysis technique
since the 1970's \cite{King1976}. It gained a lot of attention in
recent years as an effective technique for generating high-coverage
test cases and finding subtle errors in software applications \cite{Bounimova2013,Avgerinos2014}.
Symbolic execution works by exploring as many program paths as possible
in a given time budget, creating logical formulas encoding the explored
paths, using a constraint solver to check for feasible execution paths
and generate test cases, as well as finding corner-case bugs such
as buffer overflows, uncaught exceptions, and checking higher-level
program assertions \cite{Cadar2013,Pasareanu2009}.

In symbolic execution, constraint solving plays an important role
in path feasibility checking, test inputs generation, and assertions
checking. Since constraint satisfaction is a well-known NP-complete
problem, not surprisingly it is always the most time-consuming task
in symbolic execution. Despite significant advances in constraint
solving technology during the last few years\textemdash which made
symbolic execution appliable in practice\textemdash constraint solving
continues to be a bottleneck in symbolic execution \cite{Cadar2013,anand2012techniques}.
In order to ease constraint-solving in symbolic execution, some approaches
have been proposed, such as irrelevant constraint elimination \cite{Sen2005,Cadar2006},
incremental solving \cite{Cadar2008,Cadar2006}, and constraint solution
reuse \cite{Visser2012,Yang2012,Cadar2008}. 

The Green framework \cite{Visser2012} is a constraint solution reuse
framework which stores the solutions of constraints and reuses them
across runs of the same or different programs. Green stores constraints
and their solutions as key-value pairs in an in-memory database Redis
\cite{zawodny2009redis}, and queries the solutions for reuse based
on string matching. To improve the matching ratio, all the constraints
are sliced and canonized before they are stored and queried. \textit{Slicing}
is a process to obtain the minimal constraint required for satisfiability
checking. Because path conditions in symbolic execution are always
generated by conjoining a new term to an old satisfiable constraint,
the slicing process removes the old constraints which are irrelevant
to the new path condition, based on graph reachability checking. Slicing
has the potential to significantly reduce both the number of constraints
and the number of variables in the problem. \textit{Canonization}
represents each individual constraint into a normal form. Linear integer
sub-constraints are converted into a normal form $ax+by+cz+...+k\, op\,0$,
where $op\in\{=,\neq,\leq\}$. In addition, canonization sorts the
constraint in a lexicographic order, and renames the variables into
a standard form. For example, after canonization, the constraint $x+y<z\bigwedge x=z\bigwedge x+10>y$
becomes $-v0+v1-9\le0\bigwedge v0+v1-v2+1\le0\bigwedge v0-v2=0$.
As a consequence of slicing and canonization, a constraint may become
syntactically equivalent to a previously evaluated constraint and
thus simple string matching may detect a potential reuse.

Most reuse approaches for constraint solution are based on syntactic
or semantic equivalence, e.g., \cite{Makhdoom2014,Hossain2014,Yang2012,Chen2014}.
However, syntactic/semantic equivalence is not a necessary condition
for reuse\textemdash some constraints are not equivalent, but there
is still potential for reuse. Here are some examples: 
\begin{itemize}
\item Example 1: Suppose we have proved constraint $x>0\bigwedge x<y\bigwedge y-x>1$
to be satisfiable, with a solution\{x:1, y:3\}. Constraint $x<y\bigwedge y-x>1$
can also be proved to be satisfiable by reusing this solution.
\item Example 2: Suppose we have proved constraint $x<0\bigwedge x>1$ to
be unsatisfiable. Constraint $x<0\bigwedge x>1\bigwedge x\neq10$
can also be proved to be unsatisfiable by reusing this result.
\item Example 3: Suppose we have proved constraint $x<-1$ to be satisfiable
with a solution \{x:-5\}. Constraint $x<0\wedge x\neq-1$ can also
be proved to be satisfiable by reusing this solution.
\item Example 4: Suppose we have proved constraint $x<0\bigwedge x>1$ to
be unsatisfiable. Constraint $x<-1\bigwedge x>2$ can also be proved
to be unsatisfiable by reusing this result.
\end{itemize}
As far as we know, no existing approach (including Green) can reuse
constraint solutions in all of above situations. KLEE \cite{Cadar2008}
is able to cope with Examples 1 and 2, through subset and superset
matching, but is unable to cope with Examples 3 and 4. 

In this paper, we present GreenTrie, an extension to the Green framework,
which supports constraint reuse based on the logical implication relations
among constraints. GreenTrie provides a component, called L-Trie,
which stores constraints and solutions into tries (an ordered tree
data structure that is used to store a dynamic set or associative
array \cite{Bodon2003}), indexed by an implication partial order
graph of constraints. L-Trie is able to carry out logical reduction
and logical subset and superset querying for given constraints, to
check for reuse of previously solved constraints. This approach supports
constraints reuse based on their logical implication relations.

The contributions of this paper can be summarized as follows:
\begin{itemize}
\item We present a theoretical basis for checking constraint reusability
based on their logical relationship, and give rules to check the implication
relationship between linear integer arithmetic constraints.
\item We present a constraint reduction approach to reduce the constraint
into more concise form, as well as to find obviously conflicting sub-constraints.
\item We describe the L-Trie data structure, which is used to cache past
constraint solutions into tries indexed by implication partial order
graphs.
\item We give logical superset and subset checking algorithms to check the
existence of reusable solutions stored in L-Trie.
\item We evaluate the performance of GreenTrie in three scenarios: (1) reuse
in a single run of the program, (2) reuse across runs of the same
program, (3) reuse across different programs. The experiments show
that, compared to original Green framework using the Redis store,
GreenTrie achieves better reuse of constraint solving results, and
saves significant time in symbolic execution.
\end{itemize}

\section{Logical Basis of our Approach\label{sec:logical-basis}}

Constraint satisfiability checking\textemdash the quintessential NP-complete
problem\textemdash has been studied extensively, with strong motivations
arising especially from artificial intelligence. A (finite domain)
constraint satisfaction problem can be expressed in the following
form: given a set of variables, together with a finite set of possible
values that can be assigned to each variable, and a list of constraints,
find values of the variables that satisfy every constraint\cite{brailsford1999constraint}. 

In symbolic execution scenarios, the target of constraint solving
is to find a solution for given constraint (always in the form of
a conjunction of several sub-constraints). The \textit{solution},
if it exists, is a valuation function mapping the set of variables
of a constraint to a value set. If we substitute the variables in
the constraint with the values in the solution, the constraint evaluates
to TRUE. When a solution exists, the constraint is \textit{satisfiable};
if not, it is \textit{unsatisfiable}. In this paper we focus on \textit{linear
integer constraints}, for which satisfiability is decidable. In our
future work we plan to extend our approach to also cope with other
kinds of constraints, such as non-linear constraints and string constraints. 

\newtheorem{lemma}{Lemma}
\begin{lemma}\label{thm:thm1}Given two constraints $C$ and $C'$, (1) if C is
satisfiable and has a solution V, and $C\rightarrow C'$, then $C'$is
satisfiable and V is also a solution of C'. (2) if $C$ is unsatisfiable
and $C'\rightarrow C$, then $C'$ is unsatisfiable.\end{lemma}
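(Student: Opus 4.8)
The plan is to prove the two claims directly from the definitions of \emph{solution}, \emph{satisfiability}, and logical implication $\rightarrow$ given earlier in this section. Recall that a solution $V$ of a constraint is a valuation of its variables such that substituting those values makes the constraint evaluate to TRUE, and that $C \rightarrow C'$ means every valuation making $C$ true also makes $C'$ true. Both parts are essentially a restatement of the semantics of implication, so the proofs should be short; the work is just in spelling out the valuation bookkeeping cleanly.

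For part (1), I would start by assuming $C$ is satisfiable with solution $V$, so that $V$ makes $C$ evaluate to TRUE. Since $C \rightarrow C'$ holds, by the definition of logical implication every valuation satisfying $C$ must also satisfy $C'$; applying this to the particular valuation $V$ gives that $V$ makes $C'$ evaluate to TRUE as well. Hence $V$ is a solution of $C'$, and in particular $C'$ is satisfiable, which is exactly the claim.

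For part (2), I would argue by contradiction, which is the most natural route for an unsatisfiability claim. Suppose $C'$ were satisfiable; then it would have some solution $V'$ making $C'$ TRUE. Since $C' \rightarrow C$, that same $V'$ would make $C$ TRUE, so $V'$ would be a solution of $C$, contradicting the hypothesis that $C$ is unsatisfiable. Therefore $C'$ must be unsatisfiable.

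One subtlety worth being careful about is the role of the variable sets of $C$ and $C'$, since in part (1) $C'$ may mention only a subset of the variables of $C$ (as in Example~1) and one should check that the restriction or extension of $V$ to the relevant variables is still a well-defined valuation making $C'$ true; this is the only place where the argument requires more than a one-line appeal to the definition. I do not expect any genuine obstacle here, as the lemma is a direct semantic consequence of the meaning of $\rightarrow$; the main point of stating it is to license the reuse strategies illustrated by the four motivating examples.
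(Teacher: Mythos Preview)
Your proposal is correct and essentially matches the paper's own proof: part~(1) is argued identically, and for part~(2) the paper uses the contrapositive ($\neg C \rightarrow \neg C'$) rather than your proof by contradiction, which is a purely cosmetic difference. Your remark about the variable sets is a reasonable caveat but, as you anticipated, poses no real obstacle here.
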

\begin{proof}
(1) Because C is satisfiable and has a solution V, by substituting
the variables in the constraint with the values in solution V, C evaluates
to TRUE. Since $C\rightarrow C'$, according to the definition of
logical implication, $C'$ evaluates to TRUE for this substitution
too. Therefore, V is also a solution for $C'$ and $C'$ is satisfiable.
(2) If C is unsatisfiable, $\neg C$ will evaluate to TRUE for all
valuations. Since $C'\rightarrow C$, then $\neg C\rightarrow\neg C'$
and hence $C'$ will evaluate to FALSE for all valuations.i.e $C'$
is unsatisfiable.
\end{proof}
According to Lemma \ref{thm:thm1}, checking the implication relationship
between constraints can be a basis for reusing constraint satisfiablity
checks. In symbolic execution, constraints are mainly utilized to
represent the path conditions of branches in code, and each of them
is a conjunction of all the branching conditions (in terms of the
program inputs) form the first branch to current location. Therefore,
a constraint is always in the form $C_{1}\land C_{2}...\land C_{n}$,
and has a sub-constraint set $\{C_{1},C_{2}...C_{n}\}$. In our approach,
we will check the reusability of such constraints through querying
\textit{logical subsets} and\textit{ logical supersets} of the sub-constraint
set in the solution store.

\newdef{definition}{Definition}
\begin{definition}\label{def:def1}(Logical subset and logical superset) Given two constraint
sets X and Y, if $\forall_{x\in X}\exists_{y\in Y}y\rightarrow x$,
then X is a logical subset\textbf{ }of Y and Y is a logical superset
of X.

\end{definition}

For example, if X = \{x$\neq$0, x>-1, x<2\}, Y=\{x>1, x<2\}, because
$x>1\rightarrow x\neq0$, $x>1\rightarrow x>-1,\, x<2\rightarrow x<2$,
then X is a logical subset of Y, and Y is a logical superset of X,
even though Y has less elements than X.

\newtheorem{theorem2}{Theorem}
\begin{theorem2}\label{thm:thm2}Given two constraints in conjunctive form $C=\stackrel[i=1]{n}{\bigwedge}C_{i}$,
$C'=\stackrel[i=1]{m}{\bigwedge}C'_{i}$, where C has a sub-constraint
set $S=\{C{}_{1},C{}_{2}...C{}_{n}\}$, and $C'$ has a sub-constraint
set $S'=\{C'{}_{1},C'{}_{2}...C'{}_{m}\}$, (1) if C is satisfiable
and has a solution V, and S is a logical superset of $S'$, then $C'$is
satisfiable and V is also a solution of C'. (2) if C is unsatisfiable,
and $S$ is a logical subset of $S'$, then $C'$is unsatisfiable.
\end{theorem2}
\begin{proof}
(1) Since $S$ is a logical superset of $S'$, $\forall_{c'\in S'}\exists_{c\in S}$
$c\rightarrow c'$. Hence $C_{1}\land C_{2}...\land C_{n}\rightarrow C'_{1}\land C'_{2}...\land C'_{m}$,
i.e. $C\rightarrow C$'. According to Lemma \ref{thm:thm1}, if C
is satisfiable and has a solution V, then $C'$ is satisfiable and
V is also a solutions for C'. (2) Since $S$ is a logical subset of
$S'$, $\forall_{c\in S}\exists_{c'\in S'}c'\rightarrow c$. Hence
$C'_{1}\land C'_{2}...\land C'_{m}\rightarrow C_{1}\land C_{2}...\land C_{n}$,
i.e.$C'\rightarrow C$. According to Lemma \ref{thm:thm1}, if $C$
is unsatisfiable, then $C'$ is unsatisfiable. 
\end{proof}
According to Theorem \ref{thm:thm2}, a constraint can be shown to
be satisfiable if a logical superset can be retrieved in a storage
that caches satisfiable sub-constraint sets. Likewise, a constraint
can be shown to be unsatisfiable if a logical subset can be retrieved
in a storage that caches unsatisfiable sub-constraint sets. 

\textbf{Normal form of linear integer constraint. }In this paper,
every atomic linear integer constraint is canonized into the form:
\[
h_{1}v_{1}+h_{2}v_{2}+h_{3}v_{3}+...h_{n}v_{n}+k\, op\,0\quad
\]
where $v_{1},v_{2}...v_{n}\,$are distinct variables, the coefficients
$h_{1}$, $h_{2}...$, $h_{n}$ are numeric constants, k is an integer
constant, $\, h_{1}\geq0$, and $\, op\in\{=,\neq,\leq,\geq\}$. The
expression $h_{1}v_{1}+h_{2}v_{2}+h_{3}v_{3}+...h_{n}v_{n}$, which
contains all non-constant terms, is the constraint's \textit{non-constant
prefix. }

\textbf{Implication Checking Rules}. We define a list of rules to
check for specific implication relationships between two atomic linear
integer constraints. In this paper, only constraints which have the
same non-constant prefix can be checked by rules. In the future, we
plan to extend the rules to handle more complex situations.We compare
non-constant prefixes based on string comparison and constant values
based on numeric comparison, which is quite efficient. The implication
checking rules are listed below. In these rules, $P$ is a non-constant
prefix and $n$ is a constant value. The rules enable checking the
implication relationship between linear integer arithmetic constraints
with operators $=,\neq,\leq,\geq$. 

\[
(R1)\:\frac{}{C\rightarrow C}\qquad\qquad\qquad\quad\quad(R2)\:\frac{n\neq n'}{P+n=0\rightarrow P+n'\neq0}
\]

\[
(R3)\:\frac{n\geq n'}{P+n=0\rightarrow P+n'\leq0}\quad(R4)\:\frac{n\leq n'}{P+n=0\rightarrow P+n'\geq0}
\]

\[
(R5)\:\frac{n\text{>}n'}{P+n\leq0\rightarrow P+n'\neq0}\quad(R6)\:\frac{n>n'}{P+n\leq0\rightarrow P+n'\leq0}
\]

\[
(R7)\:\frac{n\text{<}n'}{P+n\geq0\rightarrow P+n'\neq0}\quad(R8)\:\frac{n<n'}{P+n\geq0\rightarrow P+n'\geq0}
\]

\section{Overview of GreenTrie}

GreenTrie extends the Green framework to improve the reuse of constraint
solutions. The overview architecture of GreenTrie is illustrated in
Fig.\ref{fig:overview}. GreenTrie includes a component named L-Trie,
which replaces the Redis store of the original Green framework. L-Trie
is a bipartite store used for caching satisfiable and unsatisfiable
constraints, respectively, each composed of a constraint trie and
its logical index. The constraint trie stores constraints in the form
of sub-constraint sets, and the logical index is a partial order graph
of implication relations for all the sub-constraints in the trie. 

L-Trie and Green work together within GreenTrie. Any request to solve
a constraint is handled by Green through the following four steps:
\textit{(1) slicing:} it removes pre-solved irrelevant sub-constraints;
\textit{(2) canonization:} it converts a constraint into normal form;
\textit{(3) reusing:} it queries the solution store for reuse; if
a reusable result is not retrieved, (4) \textit{translation:} the
constraint is translated into the input format required by the chosen
constraint solver (such as CVC3\cite{Barrett2007}, Z3, Yices\cite{Dutertre2006},
or Choco), which is then invoked to solve the constraint from scratch.
The result produced by the constraint solver is finally stored into
either satisfiable constraint store(SCS) or unsatisfiable constraint
store(UCS)(Fig.\ref{fig:overview}). 

L-Trie provides three interfaces to Green: constraint reduction, constraint
querying, and constraint storing. These are presented in detail in
the following sections. \textit{Constraint reduction} is performed
after the constraint is canonized by the Green framework; redundant
sub-constraints are removed and conflicting sub-constraints are reported
in this phase. \textit{Constraint querying} handles the requests issued
by Green to retrieve pre-solved constraints. Based on Theorem \ref{thm:thm2},
it checks whether the constraint has a logical superset in the satisfiable
constraint store or has a logical subset in the unsatisfiable constraint
store. \textit{Constraint storing} splits solved constraint into sub-constraints,
puts them into the corresponding constraint trie, and the also updates
the logical index.

\begin{figure*}
\begin{centering}
\includegraphics[clip,scale=0.8]{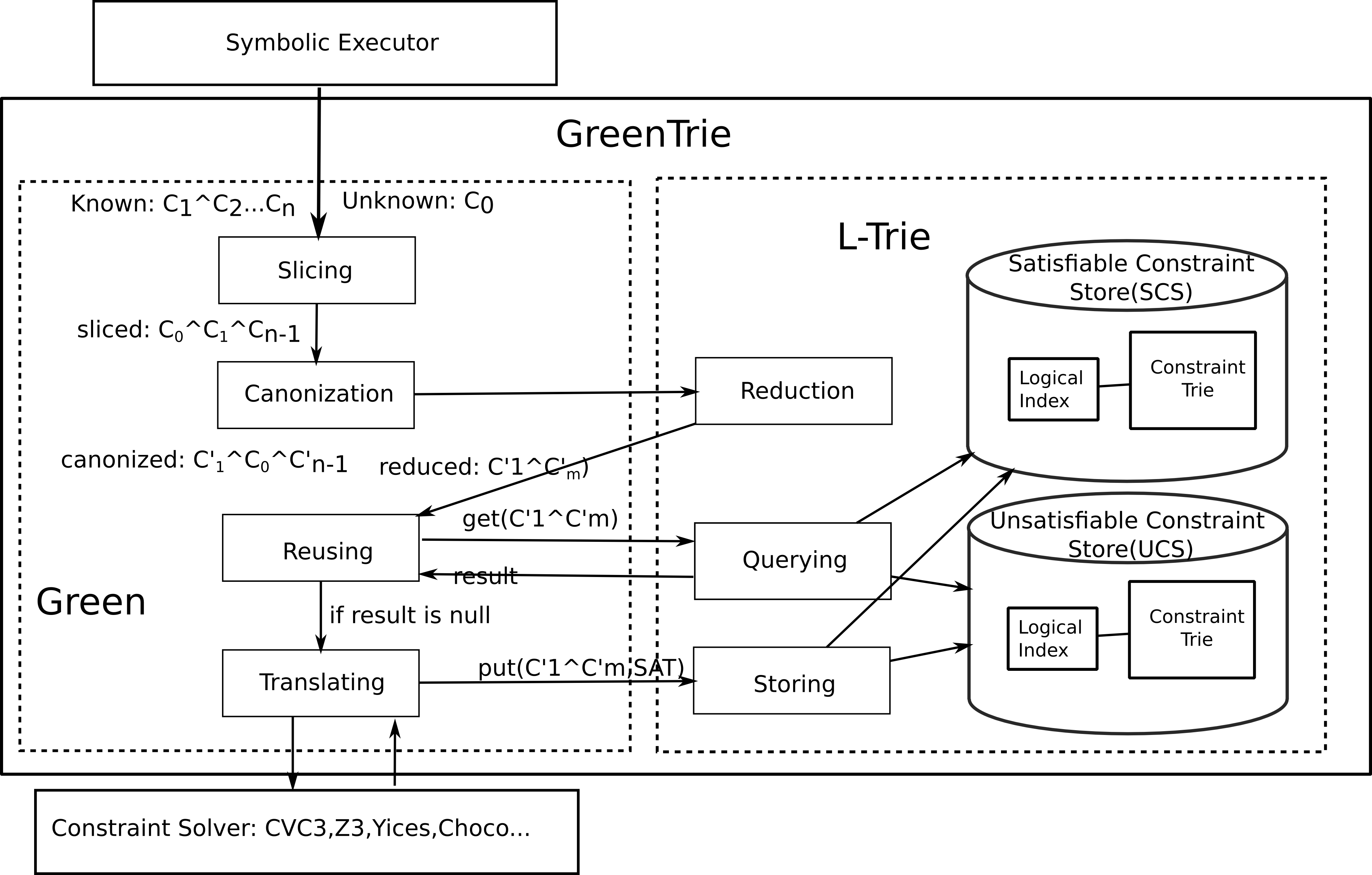}
\par\end{centering}

\protect\caption{\label{fig:overview}The overview architecture of GreenTrie}
\end{figure*}

\section{Constraint Reduction }

Symbolic execution conjoins constraints as control flow branches are
traversed. This may introduce redundant sub-constraints, where a sub-constraint
is implied by another. For example, if constraint\textit{ x$\geq$0}
is conjoined to constraint $x\neq$-2, the latter becomes redundant
and can be eliminated. It may also happen that one can easily detect
that the newly added constraint conflicts with another constraints,
making the whole constraint unsatisfiable; for example, consider the
case where \textit{x=0} is conjoined with\textit{ x$\geq$3}. Constraint
reduction in our approach is able to recognize such situations: it
can both reduce the constraint into more concise form and also find
obviously-conflicted sub-constraints. As we mentioned, we only focus
on the linear integer arithmetic constraints. In the future, we plan
to reduce other kind of constraints based on term rewriting \cite{Braione2013}. 

Our approach performs reduction as follows. The sub-constraints with
same non-constant prefix are merged and reduced based on their \textit{value
interval of non-constant prefixes}. For example, considering constraint
x+y+3$\leq$0, its non-constant prefix x+y has a value interval $[MIN,-3]$,
and for constraint x+y$\geq$0, the value interval is $[0,MAX]$.
As for constraint x+y+4=0, the value interval is $[4,4]$. If the
constraint is stated as an inequality, as for example x+y+6$\neq$0,
we have two value intervals $[MIN,-6)$ and $(-6,MAX]$. Equivalently,
we can represent this situation by introducing the concept of an \textit{exceptional
point} (in this case, \textquotedbl{}-6\textquotedbl{} ).

To support reduction, firstly all sub-constraints with the same non-constant
prefix are merged together, by computing the overlapping interval
$[A,B]$ of these constraints, and at the same time collecting the
exceptional points into a set $E$. For example, after computing of
constraint $x+y+3\geq0\wedge x+y+5\geq0\wedge x+y-4\leq0\wedge x+y\neq0\wedge x+y+6\neq0\wedge x+y-4\neq0$,
we get an overlapping interval $[-3,4]$ and an exceptional point
set $E=\{-6,0,4\}$. After this, we go through the following steps:
\begin{enumerate}
\item We discard all exceptional points that are outside the overlapping
interval; in the example, the value of $E$ becomes $\{0,4\}$.
\item If one endpoint of the overlapping interval A (or B) belongs to $E$,
we (repeatedly) change its value and eliminate A (or B) from E at
the same time. In the example after this step the interval becomes
$[-3,3]$ and the new value of $E$ is $\{0\}$.
\item If the overlapping interval is empty then the constraint is unsatisfiable
and we report a conflict; otherwise we translate $[A,B]$ and $E$
into a constraint in normal form. In the example, the final result
of our reduction is $x+y+3\geq0\wedge x+y-3\leq0\wedge x+y\neq0$.
\end{enumerate}

\section{Constraint Storing}

L-Trie provides a different storage scheme that replaces the Redis
store of Green: 
\begin{itemize}
\item Unlike Redis, which stores the strings representing constraints and
solutions as key-value pairs, L-Trie splits constraints into sub-constraint
sets, and stores them into tries, in order to support logical subset
and superset queries based on Theorem \ref{thm:thm2}.
\item L-Trie stores unsatisfiable and satisfiable constraints into separate
areas: the \textit{Unsatisfiable Constraint Store} (UCS) and the \textit{Satisfiable
Constraint Store} (SCS) respectively. The two areas are organized
differently to efficiently support logical subset querying and logical
superset querying, which pose different requirements.
\item L-Trie maintains a logical index for each of the two tries, to support
efficient check of the implication relations. The logical index is
represented as an implication partial order graph (IPOG), whose nodes
contain references to nodes in the trie.
\end{itemize}
\begin{figure*}
\begin{centering}
\includegraphics[scale=0.9]{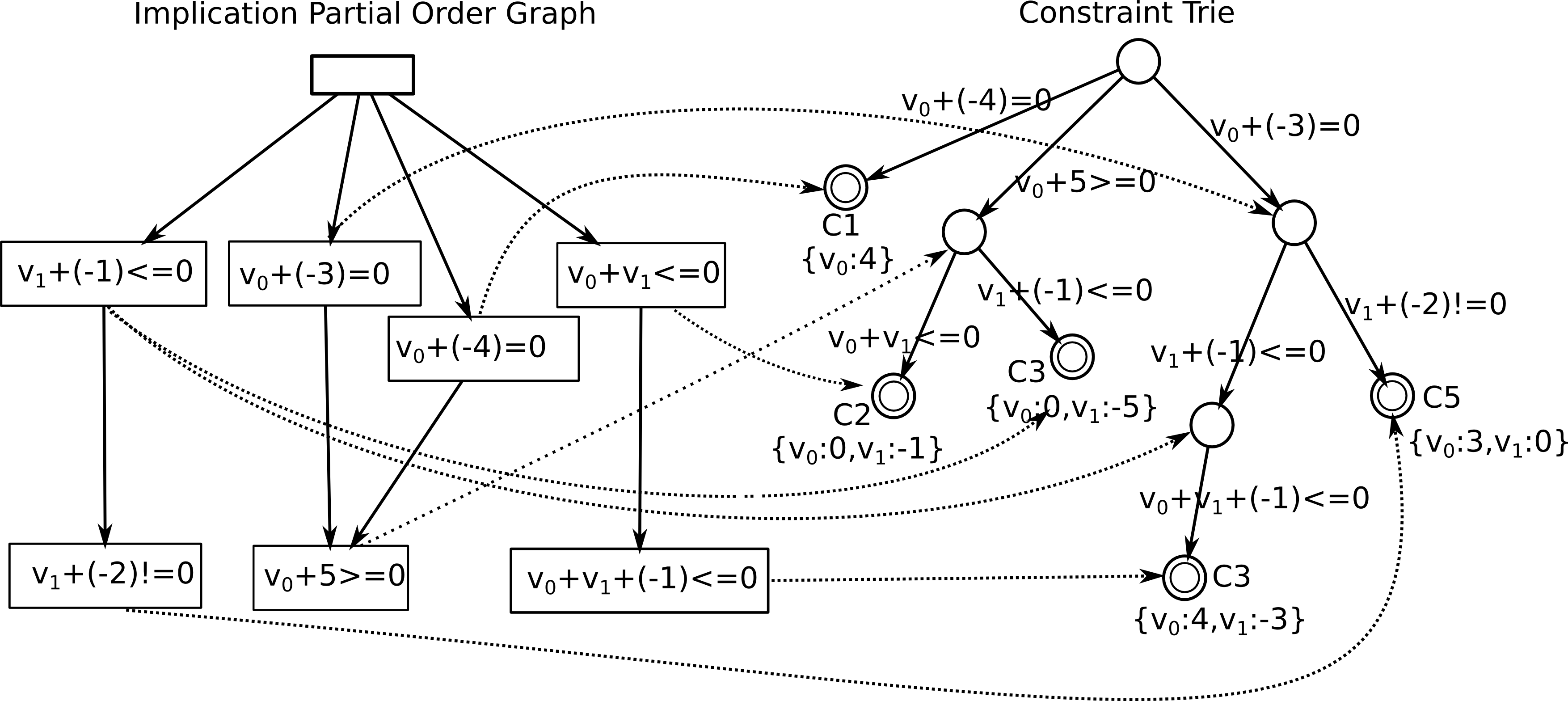}
\par\end{centering}

\protect\caption{\label{fig:trie-store}The structure of constraint stores in L-Trie
(both UCS and SCS have the same structure).}
\end{figure*}

Both UCS and SCS have the same structure (see Fig. \ref{fig:trie-store}).

\textbf{Constraint Trie}. The constraint trie is designed to store
a sub-constraint set of solved constraints. The sub-constraint set
is sorted in lexicographic order based on string comparison, to guarantee
that sub-constraints with same non-constant prefix are kept close
to each other. The labels of the constraint trie record the sub-constraints.
The leaf nodes indicate the end of the constraint and are annotated
with the solution (the solution is null for the leaves of the UCS
trie). As shown in Fig.\ref{fig:trie-store}, the leaf node C2 corresponds
to a constraint $v_{0}$+5>=0 $\wedge$$v_{0}$+$v_{1}$<=0, which
has a solution \{$v_{0}:0$,$v_{1}:-1$\}, and its sub-constraints
$v_{0}$+5>=0 and $v_{0}$+$v_{1}$<=0, are annotated as edge labels
in the path. 

If a constraint \textit{C }is a conjunction of atomic constraints
that is a prefix of another constraint \textit{C'}(e.g. C is $A\wedge B$,
and \textit{C' is }$A\wedge B\wedge C$,), only one of them is kept
in the trie. We keep the longer constraint in the SCS trie, while
we keep the shorter in the UCS trie. 

\textbf{Implication Partial Order Graph (IPOG)}. IPOG is a graph that
contains all the atomic sub-constraints appearing in its associated
constraint trie, and arranges them as a graph based on the partial
order defined by the implication relation. With this graph, given
a constraint C, we can query the sub-constraints which imply C, as
well as the sub-constraints which C implies, as we will see later.
This is useful to improve the efficiency of implication checking in
logical subset and superset querying. IPOG nodes are labeled by a
sub-constraint and have references to all trie nodes whose input edge
is labeled with exactly this sub-constraint. Through these references,
it is possible to trace all the occurrences of a given sub-constraint.

\textbf{Storing the constraints. }Everytime a constraint is solved
(or it is proved to be unsatisfiable), SCS (respectively, UCS) must
be updated to store possibly new sub-constraints that were not found
before, as we describe hereafter. Let $C=C_{1}\land C_{2}...\land C_{n}$
be the solved constraint in canonical form. Constraint $C$ can be
represented by the set <$C_{1},C_{2},...C_{n}$>, where each element
is an atomic sub-constraints. This set is sorted by the lexicographic
order that yields the canonical form. $C_{1}$ (respectively, $C_{n}$)
is called the leftmost (respectively, rightmost) sub-constraint of
C. The storage procedure proceeds as follows:
\begin{enumerate}
\item Starting from the trie root node, we consider the (possibly empty)
maximal path whose labels coincide with a prefix $C_{1}C_{2}...C_{i}$
of C %
\footnote{Note that this procedure ensures that the UCS trie stores the shortest
of any two unsatisfiable constraints where one is a prefix of the
other, while the SCS trie stores the longest.%
}.

\begin{enumerate}
\item If $C_{i}$ labels the input edge of a leaf node, it means that we
found a logical subset of C in the trie. In the case of the SCS trie,
we remove the solution labeling the leaf and append to the leaf a
linear subtree with edges labeled $C_{i+1}...C_{n}$. In the case
of the UCS trie, we simply ignore constraint C, which is not saved. 
\item If i=n and we have not reached a leaf node, it means that we found
a logical superset of C in the trie. In the case of the SCS trie,
we ignore constraint C and we do not save it. In the case of the UCS
trie, we delete the subtree rooted $C_{i}$ and the node labeled $C_{i}$
becomes a leaf, which is labeled with C's solution. 
\item Otherwise, we append a linear subtree with edges labeled $C_{i+1}...C_{n}$
$C_{i+1}...C_{n}$ to the trie node labeled $C_{i}$and add C's solution
to the last node labeled $C_{n}.$
\end{enumerate}
\item During step 1, whenever we add a new sub-constraint, it will also
be stored into IPOG in a way that preserves the partial order defined
by the implication relation among atomic sub-constraints.
\end{enumerate}

\section{Constraint Querying }

According to Theorem \ref{thm:thm2}, if we want to find a solution
for a constraint which has the constraint set C, we should check if
any logical subset of C exists in UCS, or if any logical superset
of C exists in SCS. Since the constraints are stored in tries, checking
for logical subset means that we should find a path from root to a
leaf node in the UCS trie so that each constraint in the path is implied
by one of the constraints in C. And checking for logical superset
means that we should find a path in the SCS trie, so that each constraint
in C is implied by one of the constraints in the path.

\subsection{Implication Set and Reverse Implication Set}

To support efficient check of implication between constraints in C
and constraints in trie paths, we introduce the notions of \textit{implication
set (IS) }and \textit{reverse implication set} \textit{(RIS)} of an
atomic constraint $\varphi$: \textit{IS}($\varphi$) contains all
the atomic constraints in UCS which $\varphi$ implies, whereas \textit{RIS}($\varphi$)
contains all the constraints in SCS which imply $\varphi$. With the
help of \textit{IS} and \textit{RIS}, implication checking can be
reduced to checking the existence of constraints in sets.

\textit{IS}($\varphi$) is built by searching the UCS IPOG to find
all the constraints in IPOG which $\varphi$ implies, and \textit{RIS}($\varphi$)
is built by searching the SCS IPOG to find all the constraints in
IPOG which imply $\varphi$. Instead of visiting the whole IPOGs,
we only visit the sub-graph which has the same non-constant prefix
as $\varphi$, since (see Section 2) we exploit the implication relationship
between two atomic constraints when they have same non-constant prefix.
Because such sub-graphs are often small, the task of building these
two sets is always very fast.

\subsection{Logical Superset Checking Algorithm}

\begin{algorithm}
\protect\caption{Logical superset checking algorithm}

\textit{/{*} Check if logical superset of C exists in SCS trie; C
is the constraint set to be checked. In this function, rmostRISof(L)
is the last element of L, i.e. it is the RIS of the rightmost atomic
sub-constraint of C; nodesInTrie(c) is the set of trie nodes referenced
by c in IPOG{*}/}

\textbf{Function} boolean checkSuperset(C, IPOG, Trie)
\begin{enumerate}
\item L := empty list; \textit{//the list of RIS}
\item \textbf{for each} atomic sub-constraint c in C \textbf{do}
\item ~~~~$S$ := RIS(c, IPOG);
\item ~~~~\textbf{if }$S=\textrm{Ø}$\textbf{ then} \textbf{return}
false \textbf{else} L.add(S); 
\item \textbf{for each} c in rmostRISof(L)\textbf{ do }
\item ~~~~\textbf{for each} n in nodesInTrie(c) \textbf{do }
\item ~~~~~~~~\textbf{if }isSuperset(node, L) \textbf{then} \textbf{return}
true;
\item \textbf{return} false;
\end{enumerate}
\textit{/{*} Check if the constraints on the path is a logical superset
of the constraint; n is the start node of path; L is the list of RIS
{*}/}

\textbf{Function} boolean isSuperset(n,L) 
\begin{enumerate}
\item cur:=n;//current node
\item pos:=s.size-1; \textit{//current position of L}
\item \textbf{while} $cur\neq root$ \textbf{do}
\item ~~~~\textbf{while} $cur.in\in L[pos]$ \textbf{do} 
\item ~~~~~~~~pos:=pos-1;
\item ~~~~~~~~\textbf{if} $pos<0$ \textbf{then} \textbf{return}
true;
\item ~~~~cur:=cur.previous; 
\item \textbf{return} false;\end{enumerate}
\end{algorithm}

We present an algorithm to check the logical superset of constraint
set C in SCS. This algorithm (Algorithm 1) visits the trie bottom-up,
from the nodes whose input edges are labeled with constraints that
imply the rightmost atomic sub-constraint of C, moving up towards
the root node, and checking if the constraints on the path imply the
constraints in C.

Function checkSuperset has three parameters: C is a (lexicographically)
sorted constraint set to be queried, IPOG and Trie are the implication
partial order graph and the constraint trie in SCS. As shown in lines
1--4 of function checkSuperset, we first build the RIS for each constraints
in C and put them into a list L. If one constraint's RIS is empty,
then the function returns false, indicating that a logical superset
cannot be found in SCS. Lines 5--7 check all the trie nodes referenced
by the elements contained in the last RIS of list L; i.e., the nodes
whose input edge's labeling constraints imply the rightmost sub-constraint
of C. For each of these nodes, function isSuperset checks whether
the constraint set on the path from the node to the root is a logical
superset of C. If we find such path, then the function returns true,
otherwise it returns false. Function isSuperset has two parameters:
n is the start node and L is a list of RIS corresponding to each sub-constraint
of C. Lines 3--7 visit the trie path from the start node upward to
the root. Lines 4--6 repeatedly check if the constraint labeling the
incoming edge to the current node is an element of RIS. We use a loop
instead of a branch, because it is possible that one constraint on
the path implies several constraints in C. Line 6 indicates that if
every constraint in C is implied by constraints on the path, then
a logical superset is found.

Algorithm 1 shows the benefit on performance of using IPOG as a logic
index. Instead of visiting all the trie paths, it only visits a small
set of paths from the nodes whose input constraints imply the rightmost
sub-constraint of C.

\subsection{Logical Subset Checking Algorithm}

\begin{algorithm}
\protect\caption{Logical subset checking algorithm}

\textit{/{*} Check if logical subset of C exists in UCS trie; C is
the constraint set to be checked {*}/}

\textbf{Function} boolean checkSubset(C, IPOG,Trie) 
\begin{enumerate}
\item S := \{\}; \textit{//S represents the union of ISs}
\item \textbf{for each} atomic c in C \textbf{do}
\item ~~~~S := S $\cup$ IS(c, IPOG);
\item \textbf{if }$S\neq\textrm{Ø}$\textbf{ then} \textbf{return} hasSubset
(Trie.root, S) 
\item ~~~~\textbf{else} \textbf{return} false;
\end{enumerate}
\textit{/{*}Recursively check if any logical subset exists in the
sub-tree; n is the root of sub-tree ; S is a union set of ISs.{*}/}

\textbf{Function} boolean hasSubset(n,S) 
\begin{enumerate}
\item \textbf{if} n is leaf \textbf{then} \textbf{return} true; 
\item \textbf{for each} edge in n.out \textbf{do}
\item ~~~~\textbf{if} $edge.label\in S$ \textbf{then }
\item ~~~~~~\textbf{~if} hasSubset (n.next(edge), S) \textbf{then}
\textbf{return} true;
\item \textbf{return} false;\end{enumerate}
\end{algorithm}

This section presents an algorithm (Algorithm 2) to check for a logical
subset of constraint set C in UCS. The algorithm visits the trie top-down,
starting from the root, and selects successive nodes whose input constraints
are implied by constraints in C, until a leaf node is reached.

In Algorithm 2, function checkSubset has three parameters: a (lexicographically)
sorted constraint set C, and the UCS IPOG and Trie. Lines 2--3 build
the union set S of all ISs of atomic constraints in C. If S is not
empty (Lines 4--5), function hasSubset is invoked to check if a path
exists whose constraints are the logical subset of C. If S is empty,
then the function returns false, indicating that no  logical subset
can be found in the trie. Function hasSubset is implemented as a recursive
visit of the trie.

By building the union of all ISs of sub-constraints, this algorithm
significantly decreases the complexity of implication checking among
edge labels and sub-constraints in C and improves the performance
 of logical subset checking.

\section{Evaluation\label{sec:Evaluation}}

This section presents an experimental evaluation of the performance
of the GreenTrie framework. We compare the performance of GreenTrie
with the original Green framework which uses the Redis store and also
with a situation where no reuse is made of constraint solutions. The
assessment is performed by considering three scenarios: (1) reuse
in a single run of the program, (2) reuse across runs of the same
program, (3) reuse across different programs. 

All experiments were conducted on a PC with a 2.5GHz Intel processor
with 4 cores and 4Gb of memory. It runs the Centos 7.0 operating system.
We implemented the GreenTrie framework, and integrated it into the
well-known symbolic executor Symbolic Pathfinder\cite{Pasareanu2013,Anand2007}.

The experiments that follow are based on six programs which were used
in \cite{Visser2012}: 
\begin{itemize}
\item TriTyp implements DeMillo and Offutt\textquoteright s solution of
Myers\textquoteright s triangle classification problem ; 
\item Euclid implements Euclid\textquoteright s algorithm for the greatest
common divisor using only addition and subtraction; 
\item TCAS is a Java version of the classic anti-Collision avoidance system
available from the SIR repository; 
\item BinomialHeap is a Java implementation of binomial heap; 
\item BinTree implements a binary search tree with element insertion, deletion;
\item TreeMap uses a red-black tree to implement a Java Map-like interface.
\end{itemize}
In all the tables that summarize our experimental results we use the
following conventions:
\begin{itemize}
\item $t_{0}$, and $n_{0}$ denote the running time and the number of SAT
solving invocations, respectively, for classical symbolic execution
without any reuse;
\item $t_{1}$, and $n_{1}$ denote the running time and the number of SAT
solving invocations, respectively, when Green is used;
\item $t_{2}$, and $n_{2}$ denote the running time and the number of SAT
solving invocations, respectively, when GreenTrie is used;
\item \textit{$T$} = $(t_{0}-t_{2})/t_{0}$ denotes the \textit{time saving
ratio};
\item \textit{$R=(n_{0}-n_{2}\text{)}/n_{0}$} denotes the \textit{reuse
ratio};
\item \textit{$T'$} =$(t_{1}-t_{2})/t_{1}$ denotes the \textit{time improvement
ratio}; 
\item \textit{$R'=(n_{1}-n_{2}\text{)}/n_{1}$} denotes the \textit{reuse
improvement ratio}. 
\end{itemize}

\subsection{Reuse in a Single Run\label{sub:Reuse-in-Single}}

The first experiment evaluates performance of GreenTrie in a scenario
of reuse within a single run. To evaluate how performance scales with
the size of a symbolic execution tree, we modify the loop bound of
the TreeMap, BinTree, and BinomialHeap programs, thus yielding three
versions for each of these three programs.\textit{ }The results are
shown in Table \ref{tab:tab1}.

\begin{table*}
\protect\caption{\label{tab:tab1}Experiment result of reuse in single run }

\centering{}%
\begin{tabular}{lrrrrrrrrr@{\extracolsep{0pt}.}lr@{\extracolsep{0pt}.}l}
\hline 
Program & $n_{0}$ & $n_{1}$ & $n_{2}$ & $R$ & $R'$ & $t_{0}$(ms) & $t_{1}$(ms) & $t_{2}$(ms) & \multicolumn{2}{c}{$T$} & \multicolumn{2}{c}{$T'$}\tabularnewline
\hline 
Trityp & 32 & 28 & 28 & 12.50\% & 0.00\% & 1040 & 1005 & 1073 & -3&17\% & -6&77\%\tabularnewline
Euclid & 278 & 249 & 222 & 20.14\% & 10.84\% & 5105 & 5996 & 5884 & -15&26\% & 1&87\%\tabularnewline
TCAS & 680 & 41 & 14 & 97.94\% & 65.85\% & 12742 & 3356 & 2275 & 82&15\% & 32&21\%\tabularnewline
TreeMap-1 & 24 & 24 & 24 & 0.00\% & 0.00\% & 997 & 1190 & 1079 & -8&22\% & 9&33\% \tabularnewline
TreeMap-2 & 148 & 148 & 140 & 5.41\% & 5.41\% & 2918 & 3101 & 2990 & -2&47\%  & 3&58\%\tabularnewline
TreeMap-3 & 1080 & 956 & 806 & 25.37\% & 15.69\% & 21849 & 15112 & 13166 & 39&74\%  & 12&88\%\tabularnewline
BinTree-1 & 84 & 41 & 25 & 70.24\% & 39.02\% & 1476 & 1203 & 1027 & 30&42\%  & 14&63\%\tabularnewline
BinTree-2 & 472 & 238 &  118 & 75.00\% & 50.42\% & 4322 & 4001 & 2974 & 31&19\%  & 25&67\%\tabularnewline
BinTree-3 & 3252 & 1654 &  873 & 73.15\% & 47.22\% & 36581 & 22599 & 16703 & 54&34\%  & 26&09\%\tabularnewline
BinomialHeap-1 & 448 & 38 & 21 & 95.31\% & 44.74\% & 3637 & 2383 & 2054 & 43&52\%  & 13&81\%\tabularnewline
BinomialHeap-2 & 3184  & 218 &  86 & 97.30\% & 60.55\% & 27165 & 8262 & 6235 & 77&05\%  & 24&53\%\tabularnewline
BinomialHeap-3 & 23320  & 1283 & 494 & 97.88\% &  61.50\% & 249224 & 31563 & 22809 & 90&85\%  & 27&74\%\tabularnewline
\hline 
total/average & 33002  & 4918 & 2851 & 91.36\% & 42.03\% & 367056 & 99771 & 78269 & 78&68\%  & 21&55\% \tabularnewline
\hline 
\end{tabular}
\end{table*}

Table \ref{tab:tab1} shows that GreenTrie achieves an  average reuse
ratio that reaches 91.36\%  with respect to symbolic execution without
any constraint reuse, and an average reuse improvement ratio of 42.03\%
compared to Green. In addition, GreenTrie saves an average 21.55\%
of running time with respect to Green and an average 78.68\% of running
time compared to classical symbolic execution without constraint reuse.
The experiment also shows that GreenTrie has better performance in
larger scale program analysis, which has more constraints to be solved
and costs more in symbolic execution time. For small scale of analysis,
GreenTrie may cost a little more time than classical symbolic execution
and Green, but when the scale grows GreenTrie performs better than
Green.

\subsection{Reuse across Runs\label{sub:Reuse-across-Runs}}

This section evaluates the performance of GreenTrie in the scenario
of regression verification. When a changed program is analyzed, the
solution generated by previous runs can be reused in the new run.
We evaluate the performance for three groups of changes: addition,
deletion, and modification. These changes are all small changes and
are generated manually in order to simulate the real situations in
programming. Each group includes 4 version of programs: the first
is the base version, and the others are three changed versions. \textit{Changes
by addition} are generated by adding branches to a program or adding
expressions to program conditions. \textit{Changes by deletion} just
undo changes by addition. \textit{Changes by modification} are generated
by modifying operators or variable assignments. For each group of
changes, we start the evaluation from empty stores, symbolically execute
the base version and the three changed versions of programs one by
one, and evaluate performance figures for each new version of the
program\textit{.}

Tables \ref{tab:tab2-0}, \ref{tab:tab2-1}, \ref{tab:tab2-2} show
the evaluation results for three of the programs we examined in Section
\ref{sub:Reuse-in-Single}. The results show that the average reuse
ratios for three programs are 85.11\%, 82.36\% and 98.93\% with respect
to symbolic execution with no constraint reuse, and an average reuse
improvement of 49.93\%, 73.15\% and 66.71\% with respect to Green.
Thus GreenTrie decreases by more than half the number of evaluated
constraints compared to the Green framework. Considering the average
time saving ratio, we obtain values 77.15\%, 65.47\% and 94.70\% against
symbolic execution without constraint reuse and 31.97\%, 54.76\%,
and 29.81\% as average time improvements against Green. It is also
worth noticing that, in some cases of changes by deletion, both Green
and GreenTrie reuse all the constraints, i.e. \textit{$n_{1}$}=\textit{$n_{2}$}=0.
However, in this situation, GreenTrie also saves running time from
0.54\% to 38.42\%.

\begin{table*}
\protect\caption{\label{tab:tab2-0}Experiment result of reuse across runs (program
BinTree)}

\centering{}%
\begin{tabular}{ccccccccccc}
\hline 
Changes & $n_{0}$ & $n_{1}$ & $n_{2}$ & $R$ & $R'$ & $t_{0}$(ms) & $t_{1}$(ms) & $t_{2}$(ms) & $T$ & $T'$\tabularnewline
\hline 
ADD\#1 & 3438 & 1378  & 656 & 80.92\% & 52.39\% & 39380 & 17828 & 11821 & 69.98\%  & 33.69\%\tabularnewline
ADD\#2 & 8026 & 3425 & 2406 & 70.02\% & 29.75\% & 87903 & 44560 & 36732 & 58.21\% & 17.57\%\tabularnewline
ADD\#3 & 9394  & 1080 & 615 & 93.45\% & 43.06\% & 96238 & 20467 & 15853 & 83.53\% & 22.54\% \tabularnewline
DEL\#1 & 8026 & 2202 & 1222 & 84.77\%  & 95.90\%  & 87903 & 29840 & 23527 & 75.55\% & 21.16\%\tabularnewline
DEL\#2 & 3438 & 1163 & 0 & 100.00\%  & 100.00\% & 39380 & 15083 & 3019 & 83.53\% & 79.98\%\tabularnewline
DEL\#3 & 3252 & 0 & 0 & 100.00\% & 0/0 & 36581 & 2785 & 2770 & 96.57\% & 0.54\%\tabularnewline
MOD\#1 & 3252 & 1682 & 1002 & 69.19\% & 40.43\%  & 40112 & 21997 & 16380 &  59.16\% & 25.54\%\tabularnewline
MOD\#2 & 3252 & 1680 & 632 & 80.57\% & 62.38\% & 39943  & 20510 & 10692 & 73.23\% & 47.87\%\tabularnewline
MOD\#3 & 8296  & 2375 & 970 & 88.31\% & 59.16\% & 97585 & 36794 & 21976 & 77.48\% & 40.27\%\tabularnewline
\hline 
total/average & 50374  & 14985 & 7503 & 85.11\% & 49.93\% & 624682 & 209864 & 142770 & 77.15\% & 31.97\% \tabularnewline
\hline 
\end{tabular}
\end{table*}

\begin{table*}
\protect\caption{\label{tab:tab2-1}Experiment result of reuse across runs (program
Euclid)}

\centering{}%
\begin{tabular}{ccccccccccc}
\hline 
Changes & $n_{0}$ & $n_{1}$ & $n_{2}$ & $R$ & $R'$ & $t_{0}$(ms) & $t_{1}$(ms) & $t_{2}$(ms) & $T$ & $T'$\tabularnewline
\hline 
ADD\#1 & 280 & 260 & 2 & 99.29\%  & 99.23\% & 5057 & 5207 & 1136 & 77.54\% & 78.18\%\tabularnewline
ADD\#2 & 390 & 64 & 11 & 97.18\% & 82.81\% & 6350 & 1704 & 1217 & 80.83\% & 28.58\%\tabularnewline
ADD\#3 & 404 & 325 & 16 & 96.04\% & 95.08\% & 6719 & 6070 & 1432 & 78.69\% & 76.41\%\tabularnewline
DEL\#1 & 390 & 308 & 32 & 91.79\% & 89.61\% & 6350 & 5822 & 1822 &  71.31\% & 68.70\%\tabularnewline
DEL\#2 & 280 & 0 & 0 & 100.00\%  & 0/0 & 5157 & 854 & 781 &  87.70\% & 8.55\%\tabularnewline
DEL\#3 & 278 & 249 & 1 & 96.64\% & 99.60\% & 5105  & 4539 & 728 & 85.88\% & 83.96\%\tabularnewline
MOD\#1 & 260 & 231 & 154 & 40.77\% & 33.33\% & 4610  & 4615 & 3407 &  26.10\% & 26.18\%\tabularnewline
MOD\#2 & 260 & 231 & 174 & 33.08\%  & 24.68\% & 4110  & 4320 & 3568 & 13.19\% & 17.41\%\tabularnewline
MOD\#3 & 150 & 101 & 85 & 43.33\%  & 15.84\% & 2582  & 2007 & 1806 &  30.05\% & 10.01\% \tabularnewline
\hline 
total/average & 2692 & 1769 & 475 & 82.36\% & 73.15\% & 46040 & 35138 & 15897 & 65.47\% & 54.76\% \tabularnewline
\hline 
\end{tabular}
\end{table*}

\begin{table*}
\protect\caption{\label{tab:tab2-2}Experiment result of reuse across runs(program
TCAS)}

\centering{}%
\begin{tabular}{ccccccccccc}
\hline 
Changes & $n_{0}$ & $n_{1}$ & $n_{2}$ & $R$ & $R'$ & $t_{0}$(ms) & $t_{1}$(ms) & $t_{2}$(ms) & $T$ & $T'$\tabularnewline
\hline 
ADD\#1 & 2610 & 86 & 34 & 98.70\% & 60.47\% & 55180 & 4479 & 3164 & 94.27\% & 29.36\%\tabularnewline
ADD\#2 & 2920 & 146 & 92 & 96.85\% & 36.99\%  & 59809 & 4697 & 3758 & 93.72\% & 19.99\%\tabularnewline
ADD\#3 & 6730 & 63 & 20 & 99.70\% & 68.25\% & 125879 & 4013 & 3015 & 97.60\% &  24.87\%\tabularnewline
DEL\#1 & 2920 & 55 & 0 & 100.00\%  & 100.00\% & 59809 & 4930 & 3762 & 93.71\% &  23.69\%\tabularnewline
DEL\#2 & 2610 & 0  & 0  & 100.00\%  & 0/0 & 55180 & 3333 & 2280 & 96.19\% & 31.59\% \tabularnewline
DEL\#3 & 680 & 0  & 0  & 100.00\%  & 0/0 & 12742 & 2709 & 1673 & 96.97\% & 38.24\%\tabularnewline
MOD\#1 & 2024 & 188 & 37 & 98.17\%  & 80.32\% & 29309 & 3756 & 2241 & 92.35\% & 40.34\%\tabularnewline
MOD\#2 & 1494 & 177 & 63 & 95.78\% & 64.41\% & 28464 & 3424 & 2208 & 92.24\% & 35.51\%\tabularnewline
MOD\#3 & 927 & 24 & 0 &  100.00\% & 100.00\% & 19631 & 2324 & 1527 & 92.22\% &  34.29\%\tabularnewline
\hline 
total/average & 22915 & 739 & 246 & 98.93\% & 66.71\% & 446003 & 33665 & 23628 & 94.70\% & 29.81\% \tabularnewline
\hline 
\end{tabular}
\end{table*}

\subsection{Reuse across Programs}

\begin{table*}
\protect\caption{\label{tab:tab3}Experiment result of reuse across programs}

\centering{}%
\begin{tabular}{lllllll}
\hline 
Program & Trityp & Euclid & TCAS & TreeMap & BinTree & BinomialHeap\tabularnewline
\hline 
Trityp & / & 0, 3 & 0, 3 & 0, 4 & 0, 2 & 0, 2\tabularnewline
Euclid & 0, 1 & / & 2, 5 & 0, 0 & 0, 4 & 0, 2\tabularnewline
TCAS & 0, 2 & 2, 2 & / & 0, 0 & 0, 3 & 0, 4\tabularnewline
TreeMap & 0, 0 & 0, 0 & 0, 0 & / & 256, 323 & 0, 0\tabularnewline
BinTree & 0, 0 & 0, 0 & 0, 0 & 256, 470 & / & 0, 1\tabularnewline
BinomialHeap & 2, 5 & 2, 5 & 2, 6 & 0, 3 & 1, 10 & /\tabularnewline
\hline 
\end{tabular}
\end{table*}

Constraint solutions can also be reused across different programs,
especially for the programs which have similar functionality. Our
experiments compare the inter-programs reuse of GreenTrie and Green.
We take six programs in pairs. For every pair, we start with empty
stores, and then symbolically execute one program after the other.
We record the number of reused constraint solutions, which are produced
by the first program and reused in the second, both for GreenTrie
and Green.

The results are shown in Table \ref{tab:tab3}. The first-run programs
are listed in the leftmost column and the second-run programs in the
top row. Each of the cells contains two numbers separated by a comma.
The former is the number of reused constraints when  Green  is used
and the latter is the number of reused constraints when  GreenTrie
is used. As shown in Table \ref{tab:tab3}, when a program pair has
a high level of reuse in Green, GreenTrie has an even higher level
of reuse. And when two programs share almost no constraints in Green,
GreenTrie has a few constraints to reuse.

\section{Related Work}

Our work is closely related to the Green framework, but also has some
relations with other works on constraint solution reuse and constraint
reduction. These are briefly discussed in this section.

\subsection{Reuse of Constraint Solutions }

The idea of improving the speed of constraint solving by reusing previously
solved results is not new. For example, the KLEE \cite{Cadar2008}
symbolic execution tool provides a constraint solving optimization
approach named \textit{counterexample caching}, which stores results
into a cache that maps constraint sets to concrete variable assignments
(or a special No solution flag if the constraint set is unsatisfiable).
For example, \{x + y < 10, x > 5, y \ensuremath{\ge} 0\} maps to \{x
= 6, y = 3\}, and \{i < 10, i = 10\} maps to No. Using these mappings,
KLEE can quickly answer several types of similar queries, involving
subsets and supersets of the constraint sets already cached. The constraint
set \{i < 10, i = 10, j > 12\} is quickly determined to be unsatisfiable
because it has a subset \{i < 10, i = 10\} which is unsatisfiable.
Likewise, \{x + y < 10, x > 5\} is found to be satisfiable and has
a solution \{x = 6, y = 3\} because it is a superset of \{x + y <
10, x > 5, y \ensuremath{\ge} 0\}. The subset and superset queries
in KLEE are a special case of ours: our logical subset and superset
queries fully cover KLEE's subset and superset queries.

\textit{Memoized symbolic execution} \cite{Yang2012} caches the symbolic
execution tree into a trie, which records the constraint solving result
for every branch and reuses them in new runs. When applied to regression
analysis, this allows exploration of portions of the program paths
to be skipped, instead of skipping calls to the solver. GreenTrie
and Green could work together with this approach to provide further
reuse across runs and programs and get better reuse even when the
constraints are not same. 

The work described in \cite{Makhdoom2014} proposes an approach to
eliminate constraint solving for unchanged code by checking constraints
using the test suite of a previous version. While in the process of
exploring states, this approach compares and validates each new path
condition with the solution in the test suite of the base version.
If the comparison succeeds, it just adds that test case to the new
test suite. The work described in \cite{Hossain2014} presents a technique
to identify reusable constraint solutions for regression test cases.
The technique finds variables where input values from the previous
version can be reused to execute the regression test path for the
new version. By comparing definitions and uses of a particular variable
between the old and new versions of the application, this technique
determines whether the same constraints for the variable can be (re)used.
GreenTrie is complementary to these approaches, and is able to provide
better reuse when constraints are not syntactically equivalent.

\subsection{Constraint Reduction}

Reducing the constraint into a short one is a popular optimization
approach of SAT/SMT solvers and symbolic executors \cite{Cadar2008,Sen2005,Cadar2006}.
For example, KLEE \cite{Cadar2008} does some constraint reductions
before solving: (1)\textit{ Expression rewriting}: These are classical
techniques used by optimizing compilers: e.g., simple arithmetic simplifications
(x + 0 $\Rightarrow$ x), strength reduction ($x*2^{n}\Rightarrow x<<n$,
where <\textcompwordmark{}< is the bit shift operator), linear simplification
(2{*}x - x $\Rightarrow$ x). (2) \textit{Constraint set simplification}:
KLEE actively simplifies the constraint set when new equality constraints
are added to the constraint set by substituting the value of variables
into the constraints. For example, if constraint x < 10 is followed
by a constraint x = 5, then the first constraint will be simplified
to true and be eliminated by KLEE. (3)\textit{ Implied value concretization}:
KLEE uses the concrete value of a variable to possibly simplify subsequent
constraints by substituting the variable's concrete value. (4) \textit{Constraint
independence}. KLEE divides constraint sets into disjoint independent
subsets based on the symbolic variables they reference. By explicitly
tracking these subsets, KLEE can frequently eliminate irrelevant constraints
prior to sending a query to the constraint solver. 

The slicing and canonization of Green framework is also able to reduce
the constraints. Constraint slicing is based on constraint independence,
and eliminates irrelevant constraints in an incremental way. Canonization
is able to reduce the constraint by expression rewriting with arithmetic
simplifications. Our approach simplifies the constraint set based
on logic relations, therefore it can reduce constraint into a simpler
form after slicing and canonization by Green.

\subsection{Discussion}

The biggest difference between GreenTrie and other approaches is that
it reuses the constraint solving result based on the implication relationship
among constraints. Green\cite{Visser2012}, memoized symbolic execution
\cite{Yang2012}, the approaches presented in \cite{Makhdoom2014},
\cite{Hossain2014} and \cite{Chen2014} are all based on syntactic
or semantic equivalence of constraint, while KLEE\cite{Cadar2008}
reuses constraints based on simple implication relationships\textemdash subset
and superset. GreenTrie includes the capabilities of these approaches
to support reuse of constraint solutions. The benefits have been demonstrated
in this paper by comparing the degree of constraint reuse by GreenTrie
in comparison with Green. 

We also have shown that GreenTrie saves symbolic execution time with
respect to Green. One reason is that, because of its higher reuse
ratio, it invokes the solver less times than Green. Another reason
is that the logical superset and subset querying from L-Trie is performed
as efficiently or even better than querying from Redis in Green. As
shown in the experiments of Section \ref{sub:Reuse-across-Runs},
when both GreenTrie and Green reuse all the constraints, GreenTrie
is still a little faster than Green.

Unlike Green, which uses Redis to store and query solutions, GreenTrie
saves SCS and UCS as two files on disk and loads them into memory
when symbolic execution is started. GreenTrie uses almost the same
memory as Green for symbolic execution. For example, in the case of
Bintree-3 in Section \ref{sub:Reuse-in-Single}, GreenTrie uses 284Mb
memory, and Green uses 288Mb (including 5M due to the Redis process).
Of course, if needed, it is not difficult to publish interfaces to
GreenTrie as standalone services and make constraint solutions reusable
across different computers. GreenTrie also optimizes the space occupied
by L-Tries: each expression is an object (a sub-constraint is also
an expression composed by smaller expressions), and its occurrences
in different constraints in the trie and the IPOG are all references
to this object. Since the constraints in symbolic execution are always
composed by the same group of expressions/sub-constraints, this optimization
significantly decreases the space occupied by L-Tries. As an example,
in the case of Bintree-3 the total size of SCS and UCS stores is 387
Kb for 873 cached constraints composed with 81 expressions. 

GreenTrie has one limitation compared to the original Green framework:
by now GreenTrie is only able to reuse the SAT solving results, and
cannot reuse the model counting results (that are utilized to calculate
path execution probabilities\cite{Geldenhuys2012}) as Green instead
does.

\section{Conclusion and Future Work}

We introduced a new approach to reuse the constraint solving results
in symbolic execution based on their logical relations. We presented
GreenTrie, an extension to the Green framework, which stores constraints
and solutions into two tries indexed by implication partial order
graphs. GreenTrie is able to carry out logical reduction and logical
subset and superset querying for given constraint, to check if any
solutions in stores can be reused. As our experimental results show,
GreenTrie not only saves considerable symbolic execution time with
respect to the case where constraint evaluations are not reused, but
also achieves better reuse and saves significant time with respect
to Green.

Our future work will extend GreenTrie to support more kinds of constraints
other than linear integer constraints, through adding implication
rules and extending query algorithm, as well as introducing the term
rewriting technique\cite{Braione2013} to simplify the complex constraints.
We also plan to make the summaries in compositional symbolic execution\cite{Godefroid2007,Anand2008}
reusable at a finer granularity, considering that the summary is a
disjunctive constraint that composed by pre and post conditions of
paths of target method. This work is part of our long-term efforts
that aim at supporting incremental and agile verification\cite{Ghezzi2013,ghezzi2014requirement,Bianculli2015}.

\section*{Acknowledgments}

We thank Domenico Bianculli, Srdjan Krstic, Giovanni Denaro, Mauro
Pezzè, Pietro Braione for comments and suggestions in various stages
of this work. This work was supported by European Commission, Program
IDEAS-ERC, Project 227977-SMScom, and the National Natural Science
Foundation of China under Grant No. 61272108, No.61373038, No.91118003.

\bibliographystyle{unsrt}
\bibliography{references}

\end{document}